\newtheorem{theorem}{Theorem}
\newtheorem{proposition}{Proposition}
\newtheorem{lemma}{Lemma}
\theoremstyle{remark}
\newtheorem*{remark}{Remark}
\title{Concavity and Convexity of Order Statistics in Sample Size}
\author{Mitchell Watt\footnote{Department of Economics, Stanford University. Email: \href{mailto:mwatt@stanford.edu}{mwatt@stanford.edu}. I gratefully acknowledge the support of the Koret Fellowship, the Ric Weiland Graduate Fellowship in the Humanities and Sciences, and the Gale and Steven Kohlhagen Fellowship in Economics. }}
\begin{document}
\maketitle

\begin{abstract}
We show that the expectation of the $k^{\mathrm{th}}$-order statistic of an i.i.d. sample of size $n$ from a monotone reverse hazard rate (MRHR) distribution is convex in $n$ and that the expectation of the $(n-k+1)^{\mathrm{th}}$-order statistic from a monotone hazard rate (MHR) distribution is concave in $n$ for $n\ge k$. We apply this result to the analysis of independent private value auctions in which the auctioneer faces a convex cost of attracting bidders. In this setting, MHR valuation distributions lead to concavity of the auctioneer's objective. We extend this analysis to auctions with reserve values, in which concavity is assured for sufficiently small reserves or for a reasonably large number of bidders.
\end{abstract}

\section{Introduction}
In this paper, I provide conditions under which the order statistics of an i.i.d. sample of size $n$ are convex or concave in $n$.

\section{Mathematical Preliminaries}

Let $X_1, X_2,...,X_n$ be a sample of size $n$ drawn identically and independently from a continuous distribution with cumulative distribution function $F$ with support in $\mathbb{R}$ and let $f$ be its associated probability density function. Let $X_{1:n}\le X_{2:n} \le ... \le X_{n-1:n} \le X_{n:n}$ be the order statistics (obtained by sorting the sample from smallest to largest) and denote by $\mu_{k:n}$ the expected $k^{\mathrm{th}}$ order statistic of a sample of size $n$. Throughout, we assume that $F$ is such that the expectations $\mu_{k:n}$ exist (for finite $1\le k\le n$). We use the notation $\overline{F}(x)=1-F(x)$ for the reliability function associated with $F$.

The \emph{hazard rate} (or failure rate) of distribution $F$ is defined by 
\[ h(x)=\frac{f(x)}{\overline{F}(x)}.\]
The \emph{reverse hazard rate} of distribution $F$ is defined by
\[ h'(x)=\frac{f(x)}{F(x)}.\]
A distribution has \emph{monotone hazard rate (MHR)} (or increasing failure rate) if the function $h(x)$ is nondecreasing in $x$. A distribution has \emph{monotone reverse hazard rate (MRHR)} (decreasing reverse failure rate) if $h'(x)$ is nonincreasing in $x$. A distribution is \emph{log-concave} if $\log(f(x))$ is a concave function of $x$.

It is well-known (for example, by \cite{gupta2012log}) that log-concave distributions have MHR and MRHR (but the reverse implication does not hold). Log-concave distributions include the normal distribution, the uniform distributions over a convex set, the extreme value distribution, the gamma distribution (for certain shape parameters), the Weibull distribution (for certain shape parameters) and the exponential distribution (which is the unique family of distributions with a \textit{constant} hazard rate).

\section{Main Result}
We consider the expected $k^{\mathrm{th}}$ order statistic and the expected $(n-k+1)^{\mathrm{th}}$ order statistic for fixed $k$ as a function of the sample size $n\ge k$. Informally, we refer to the former as the $k^{\mathrm{th}}$ ``bottom'' order statistic and the latter as the $k^{\mathrm{th}}$ ``top'' order statistic.  It is well-known that for \emph{any} distribution function $F$, the expected minimum observation of the sample, $\mu_{1:n}$, is a nonincreasing, convex function of the sample size and the expected maximum observation, $\mu_{n:n}$, is a nondecreasing, concave function of the sample size (these results also follows from our results below). In \Cref{thm:main}, we extend this characterization to the top and bottom order statistics under assumptions on the distribution's hazard rate.

\begin{theorem} \label{thm:main}
Let $F$ be a continuous distribution and consider any fixed $k\ge 1$:
\begin{enumerate}[label=(\alph*)]
    \item If $F$ has MRHR, the expected $k^\mathrm{th}$ order statistic, $\mu_{k:n}$, is a nonincreasing, convex function of the sample size $n$ for $n\ge k$. 
    \item If $F$ has MHR, the expected $(n-k+1)^\mathrm{th}$ order statistic, $\mu_{n-k+1:n},$ is a nondecreasing, concave function of the sample size $n$ for $n\ge k$.
\end{enumerate}
In particular, if $f$ is log-concave, the conclusions of both (a) and (b) above hold.
\end{theorem}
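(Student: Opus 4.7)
The plan is to reduce (b) to (a), prove (a) by establishing a clean integral identity for the first difference $\mu_{k:n}-\mu_{k:n+1}$, and then exploit the MRHR assumption through a short stochastic-ordering argument. First, I would observe that (b) reduces to (a) by considering $Y=-X$: if $X$ has MHR then $Y$ has MRHR, and since the $k$th smallest order statistic of $(-X_1,\ldots,-X_n)$ equals $-X_{n-k+1:n}$, the nondecreasing concavity asserted in (b) follows directly from the nonincreasing convexity asserted in (a). The log-concave corollary is then immediate from the fact cited in the preliminaries that log-concavity implies both MHR and MRHR.

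The centerpiece of the proof is the following identity. Starting from $\mu_{k:n}-\mu_{k:n+1}=\int(F_{k:n+1}(x)-F_{k:n}(x))\,dx$ and the elementary recursion (obtained by conditioning on the value of the $(n+1)$th draw)
\[F_{k:n+1}(x)-F_{k:n}(x)=F(x)\cdot\Pr(\text{exactly }k-1\text{ of }n\text{ observations are }\le x)=\binom{n}{k-1}F(x)^{k}\overline{F}(x)^{n-k+1},\]
a change of variables $u=F(x)$ with quantile density $q(u):=1/f(F^{-1}(u))$, together with the factorization $u^{k}=u\cdot u^{k-1}$ and recognition of the Beta$(k,n-k+2)$ density, yields
\[\mu_{k:n}-\mu_{k:n+1}=\frac{1}{n+1}\,\mathbb{E}\!\left[\psi(U_{k:n+1})\right],\qquad \psi(u):=u\,q(u)=\frac{u}{f(F^{-1}(u))},\]
where $U_{k:n+1}$ denotes the $k$th order statistic of $n+1$ i.i.d.\ Uniform$(0,1)$ draws. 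Nonnegativity of $\psi$ already delivers the nonincreasing part of (a) for \emph{any} distribution $F$, matching the well-known special case $k=1$.

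The MRHR hypothesis enters simply as monotonicity of $\psi$: since $h'(F^{-1}(u))=f(F^{-1}(u))/u=1/\psi(u)$ and $F^{-1}$ is nondecreasing, MRHR is equivalent to $\psi$ being nondecreasing on $(0,1)$. Next, $U_{k:n+2}\le_{st}U_{k:n+1}$ holds by the obvious almost-sure coupling (adding one more uniform can only decrease the $k$th smallest), so $\mathbb{E}[\psi(U_{k:n+2})]\le\mathbb{E}[\psi(U_{k:n+1})]$. Multiplying by the further-decreasing factor $1/(n+2)\le 1/(n+1)$, with both sides nonnegative, gives
\[\mu_{k:n+1}-\mu_{k:n+2}=\tfrac{1}{n+2}\mathbb{E}[\psi(U_{k:n+2})]\le\tfrac{1}{n+1}\mathbb{E}[\psi(U_{k:n+1})]=\mu_{k:n}-\mu_{k:n+1},\]
which is exactly convexity in $n$. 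The main obstacle I anticipate is isolating the clean identity; once the increments are rewritten as $(n+1)^{-1}\mathbb{E}[\psi(U_{k:n+1})]$ with $\psi$ nonnegative and, under MRHR, nondecreasing, both monotonicity and convexity fall out from a single elementary comparison.
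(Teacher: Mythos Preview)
Your argument is correct and noticeably cleaner than the paper's. Both proofs begin from the same David--Nagaraja identity $\Delta_{k:n}=\binom{n}{k-1}\int F^k\overline{F}^{\,n-k+1}\,dx$, but then diverge. The paper subtracts $\Delta_{k:n}-\Delta_{k:n+1}$ directly, obtains an integral with a sign-changing integrand, and signs it via the Barlow--Proschan lemma: it multiplies the integrand by the reverse hazard rate, computes the resulting Beta integral explicitly (by computer algebra) to verify nonnegativity of all the tail integrals $J(t)$, and then divides by $h'(x)$ using the lemma. Part (b) is then redone from scratch with a parallel computation. By contrast, you recognize that $\Delta_{k:n}=\frac{1}{n+1}\mathbb{E}[\psi(U_{k:n+1})]$ with $\psi(u)=u/f(F^{-1}(u))$, observe that MRHR is \emph{exactly} the statement that $\psi$ is nondecreasing, and finish with the almost-sure coupling $U_{k:n+2}\le U_{k:n+1}$ together with the factor $1/(n+2)\le 1/(n+1)$. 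Your reduction of (b) to (a) via $Y=-X$ (noting that MHR for $X$ is MRHR for $-X$ and $(-X)_{k:n}=-X_{n-k+1:n}$) also halves the work relative to the paper, which treats the two cases separately. The paper's route is more self-contained in that it never names stochastic ordering or quantile densities, but your approach avoids any explicit integral evaluation and makes the role of the hazard-rate hypothesis completely transparent. The only caveat worth a line in the write-up is that the change of variables $u=F(x)$ and the definition of $q(u)=1/f(F^{-1}(u))$ tacitly assume $F$ is strictly increasing on its support; this is harmless under MRHR but should be noted.
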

\begin{proof}[Proof of (a)]
The bottom order statistics satisfy the identity due to \cite{david1997augmented}
\begin{equation} \Delta_{k:n}:= \mu_{k:n}-\mu_{k:n+1} = \binom{n}{k-1}\int_{-\infty}^\infty F^k(x)\overline{F}^{n-k+1}(x) \, \mathrm{d}x \label{id:david} \end{equation}
for $n\ge k$. Because the integrand on the right is nonnegative for all $x$, we see immediately that $\mu_{k:n}\ge \mu_{k:n+1}$, so that $\mu_{k:n}$ is nonincreasing in $n$. (Note that this conclusion does not depend on $F$ being MRHR.)

To show convexity, we will show that $\Delta_{k:n+1}\le \Delta_{k:n}$. By subtracting \cref{id:david} for $n$ and $n+1$, we obtain
\begin{align}
    \Delta_{k:n}-\Delta_{k:n+1} &= \binom{n}{k-1} \int_{-\infty}^\infty F^k(x)\overline{F}^{n-k+1}(x) \left[ 1-\frac{n+1}{n-k+2}\overline{F}(x) \right] \, \mathrm{d}x. \label{eq:difference}
\end{align}
Define $x^\ast$ by $F(x^\ast)=\frac{k-1}{n+1}$. Note that the integrand in \cref{eq:difference} is negative for $x<x^\ast$, zero at $x=x^\ast$ and positive for $x>x^\ast$. 

To sign the integral in \cref{eq:difference}, we use an approach similar to \cite{li2005note}, which exploits a lemma due to \cite{barlow1981statistical}, stated below for reference.\footnote{Because \citeauthor{barlow1981statistical}'s proof of \Cref{lem:measure} is not easily accessible, I provide a simple proof of (a) here assuming differentiability of $g$ (part (b) is proved similarly). Via integration by parts, we have that
\[\int_a^b g(x) \, \mathrm{d}W(x)=\int_a^b \left[g(x) \left(\frac{-\mathrm{d}}{\mathrm{d}x}\int_x^b \, \mathrm{d}W(x) \right) \right]\, \mathrm{d}x= g(a)\int_a^b \, \mathrm{d}W(x)+\int_a^b \left[g'(x) \int_x^b  \, \mathrm{d}W(x) \right] \, \mathrm{d}x, \] which is nonnegative by the assumptions on $g$ and $W$.}
\begin{lemma} \label{lem:measure}
Let $W$ be a measure on interval $(a,b)$ and $g$ a nonnegative function defined on the same interval.
\begin{enumerate}[label=(\alph*)]
    \item Suppose that $g$ is nondecreasing and $\int_t^b \, \mathrm{d}W(x) \ge 0$ for all $t\in(a,b)$. Then $\int_a^b g(x) \, \mathrm{d}W(x)\ge 0$.
    \item Suppose that $g$ is nonincreasing and  $\int_a^t \, \mathrm{d}W(x) \le 0$ for all $t\in (a,b)$. Then $\int_a^b g(x)\, \mathrm{d}W(x)\le 0$.
\end{enumerate}
\end{lemma}
Consider the integral
\[ J(t):= \int_{t}^\infty F^{k-1}(x)\overline{F}^{n-k+1}(x) \left[1-\frac{n+1}{n-k+2}\overline{F}(x)\right] f(x)\, \mathrm{d}x. \]
Note that the integrand of $J$ differs from that of \cref{eq:difference} by a multiplicative factor of the reverse hazard rate, $h'(x)$. Note that the integrand of $J(t)$ is negative for $x<x^\ast$, zero at $x=x^\ast$ and positive for $x>x^\ast$. Thus, $J(t)$ must be nonnegative for $t\ge x^\ast$, and for $t<x^\ast$, we have that $J(t)\ge J(-\infty)$. So to show $J(t)\ge 0$ for all $t$, it suffices to show that $J(-\infty)\ge 0.$

We make the change of variable $u=F(x)$ and rewrite the integral $J(-\infty)$ as
\begin{equation} \int_{0}^1 u^{k-1} (1-u)^{n-k+1}\left[1-\frac{n+1}{n-k+2}(1-u)\right] \, \mathrm{d}u. \label{eq:simplified}  \end{equation}
Applying the definition of the Beta function,
\[ \int_0^1 u^{m_1-1}(1-u)^{m_2-1} \, \mathrm{d} x = B(m_1,m_2)= \frac{(m_1-1)!(m_2-1)!}{(m_1+m_2-1)!}, \]
for positive integers $m_1$ and $m_2$,\footnote{See, for example, \cite{jeffrey2008handbook}, Section 11.1.7.} we have that
\[ J(-\infty)= \frac{(k-1)!(n-k+1)!}{(n+2)!}. \]
This is clearly nonnegative so that $J(t)\ge 0$ for all $t$. So by \Cref{lem:measure}(a), multiplying the integrand in $J(t)$ through by the inverse reverse hazard rate $1/h'(x)$ (which is nondecreasing by assumption that $F$ is MRHR), we obtain that the integral in \cref{eq:difference} is nonnegative as well. This implies $\Delta_{k:n}\ge \Delta_{k:n+1}$, so that $\mu_{k:n}$ is convex in $n$.
\end{proof}
\begin{remark}
Note that for $k=1$, the integral in \cref{eq:difference} is $ \int_{-\infty}^\infty F^{k+1}(x)\overline{F}^{n-k+1}(x) \mathrm{d}x, $ which is always nonnegative. This shows that the minimal order statistic is always convex in $n$, without any assumptions on the distribution $F$.
\end{remark}

\begin{proof}[Proof of (b)]
We apply another identity due to \cite{david1997augmented},
\[ \mu_{r:n}-\mu_{r-1:n-1}=\binom{n-1}{r-1}\int_{-\infty}^{\infty} F^{r-1}(x)\overline{F}^{n-r+1}(x) \mathrm{~d} x. \]
Letting $r=n-k+1$, we obtain
\begin{equation} \delta_{k:n}:=\mu_{n-k+1:n}-\mu_{n-k: n-1}=\binom{n-1}{n-k}\int_{-\infty}^{\infty} F^{n-k}(x)\overline{F}^{k}(x) \mathrm{~d} x. \label{id:top} \end{equation}
The integrand is nonnegative, so that $\mu_{n-k+1:n}\ge \mu_{n-k:n-1}$, that is, the $k^{\mathrm{th}}$ top order statistic is nondecreasing as a function of $n$. (Note that this conclusion does not depend on $F$ being MHR.)

To show concavity, we will show that $\delta_{k:n}\ge \delta_{k:n+1}$. By subtracting \cref{id:top} for $n$ and $n+1$, we obtain
\begin{equation} \delta_{k:n+1}-\delta_{k:n}=\binom{n-1}{n-k}\int_{-\infty}^\infty F^{n-k}(x)\overline{F}^k(x) \left[ \frac{n}{n-k+1}F(x) -1\right] \, \mathrm{d}x. \label{eq:difftop} \end{equation}
Define $x^\dagger$ by $F(x^\dagger)=\frac{n-k+1}{n}$. The integrand in \cref{eq:difftop} is negative for $x < x^\dagger$, zero at $x=x^\dagger$ and positive for $x>x^\dagger$. We will show that the integral in \cref{eq:difftop} is nonpositive.

To do so, consider the integral
\[ K(t)=\int_{-\infty}^t F^{n-k}(x)\overline{F}^{k-1}(x)\left[\frac{n}{n-k+1}F(x)-1\right] f(x) \, \mathrm{d}x, \]
which differs from that of \cref{eq:difftop} by a multiplicative factor of $h(x)$. The integrand of $K(t)$ is negative for $x<x^\dagger$, zero at $x=x^\dagger$ and positive for $x>x^\dagger$. Thus $K(t)$ is nonpositive if $t\le x^\dagger$ and $K(t)\le K(\infty)$ for $t>x^\dagger$. Thus, if we show that $K(\infty)\le 0$, we will have that $K(t)\le 0$ for all $t$.

We make the change of variable $u=F(x)$, to rewrite $K(\infty)$ as
\[ \int_0^1 u^{n-k}(1-u)^{k-1}\left[\frac{n}{n-k+1}u-1 \right] \, \mathrm{d}u, \]
which can be evaluated as
\[ K(\infty)= -\frac{ (n-k)!(k-1)! }{ (n+2)! } .\]

Since the above integral is nonpositive, $K(t)\le 0$ for all $t$. So by \Cref{lem:measure}(b), since $1/h(x)$ is nonincreasing for MHR distributions, we have that the integral in \cref{eq:difftop} is nonpositive as well. Thus, $\delta_{k:n+1}\le \delta_{k:n}$ so that $\mu_{n-k+1:n}$ is concave in $n$.

\end{proof}

\begin{remark}
Note that for $k=1$, the integral in \cref{eq:difftop} is $-\int_{-\infty}^\infty F^{n-k}(x)\overline{F}^{k+1}(x) \, \mathrm{d}x$, which is always nonpositive. This shows that the maximal order statistic is always concave in $n$, without assumptions on the distribution $F$.
\end{remark}

We end this section by noting that in the absence of assumptions on the distribution $F$ that the conclusions of \Cref{thm:main} may not hold. In particular, we consider samples from the $\mathrm{Pareto}(a,v)$ distribution for $a,v>0$, which has probability density function $f(x)=va^vx^{-v-1}\mathbb{1}_{[x\ge a]}$ and cumulative distribution function $F(x)=1-(a/x)^v$. This distribution does not have MHR as its hazard rate is a decreasing function. The expected order statistics are calculated in \cite{malik1966exact} as
\[ \mu_{k:n}=\frac{n!\Gamma(n-k-1/v+1) }{(n-k)!\Gamma(n-1/v+1)}, \]
where $\Gamma (z)=\int_0^\infty x^{z-1}e^{-x}\, \mathrm{d}x.$ For the specific example of $a=1$ and $v=3/4$, the expected second-top order statistics, $\mu_{n-1:n}$, are 3, 5.4, 8.1 for $n=2,3,4$ respectively, which violates concavity (in fact, this order statistic is a convex function of $n$ generally). By considering the transformation $X\mapsto -X$ of the same Pareto random variable, we also obtain non-convex $2^{\mathrm{nd}}$ order statistics.

\section{Auction Application}
One well-known application of the theory of order statistics is to auction theory. In the independent private values (IPV) model of a standard auction, each potential buyer $i$'s valuation for an item, $v_{i}$, is drawn identically and independently from a common knowledge distribution $F$. The realized valuation is private knowledge to buyer $i$. By the payoff and revenue equivalence theorem (see, for example, \cite{milgrom2004putting}), in any auction with $n$ bidders where the bidder with the highest bid wins (in the Bayes-Nash equilibrium, this will be the buyer for whom $v_{i}=v_{n:n}$), the expected payoff to the auctioneer is $\mu_{n-1:n}$.

Suppose that the auctioneer faces a cost associated with attracting bidders to the auction. In particular, suppose that attracting $n$ bids to the auction results in a cost $c(n)$ for the auctioneer, where $c$ is a positive, convex function of $n$.\footnote{Convexity is a common assumption of cost functions, reflecting (in this case) that it is increasingly difficult to attract marginal participants to the auction.} The cost might capture outlays associated with advertising the auction, organizing the auction or screening potential bidders. \Cref{thm:main} allows us to conclude the following about the number of bidders in the resulting auction.

\begin{proposition} \label{prop:auction}
Consider an IPV auction with a convex cost of attracting bidders. If the value distribution $F$ has MHR and compact support, the auctioneer's objective function is concave and has a finite maximizer $n^\ast$.
\end{proposition}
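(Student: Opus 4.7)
The plan is to define the auctioneer's objective as $\pi(n) := \mu_{n-1:n} - c(n)$ for $n \ge 2$ (the expected revenue from a standard auction with $n$ bidders minus the cost of attracting them), and to establish (i) concavity of $\pi$ and (ii) existence of a finite maximizer.

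For (i), I would apply \Cref{thm:main}(b) with $k=2$: the MHR hypothesis on $F$ directly yields that $\mu_{n-1:n}$ is a concave function of $n$ on $\{2,3,\ldots\}$. Since $c$ is convex, $-c$ is concave, and as the sum of two concave functions on the same domain, $\pi$ is concave in $n$ on $\{2,3,\ldots\}$.

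For (ii), compactness of the support of $F$ gives an upper bound $\bar v < \infty$ with $\mu_{n-1:n} \le \bar v$ for every $n$. Combined with the monotonicity and concavity from \Cref{thm:main}(b), the forward differences $\delta_{2:n+1} = \mu_{n:n+1} - \mu_{n-1:n}$ are nonnegative, nonincreasing, and have bounded partial sums, so $\delta_{2:n+1} \to 0$. Meanwhile the increments $c(n+1) - c(n)$ are nondecreasing by convexity of $c$; provided these eventually become strictly positive (the natural reading of a convex cost of attracting marginal bidders), they are eventually bounded below by a positive constant and thus eventually exceed $\delta_{2:n+1}$. Hence $\pi(n+1) - \pi(n) = \delta_{2:n+1} - (c(n+1)-c(n))$ is eventually negative, and since a concave sequence on $\{2,3,\ldots\}$ that is not monotonically increasing must attain its supremum, there exists a finite maximizer $n^*$.

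The main subtlety is the last step: one needs the nondecreasing increments of $c$ to eventually dominate the vanishing increments of $\mu_{n-1:n}$. This is immediate under any economically natural reading of "positive, convex cost of attracting bidders" (e.g., $c$ eventually increasing, or $c(n)\to\infty$), but a pathological $c$ that decreases toward a strictly positive limit would need to be excluded by hand, as such a $c$ is consistent with the stated hypotheses yet yields $\pi$ monotonically increasing to the supremum $\bar v - \lim c$ with no finite maximizer.
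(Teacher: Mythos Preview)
Your argument for concavity is exactly the paper's: apply \Cref{thm:main}(b) with $k=2$ and add the concave function $-c$. For the existence of a finite maximizer, the paper takes a shorter route than you do: it simply observes that $\mu_{n-1:n}$ is bounded above by the top of the compact support while $c(n)\to\infty$, so the objective tends to $-\infty$ and hence attains a finite maximum. Your increment-comparison argument (showing $\delta_{2:n+1}\to 0$ while the cost increments are eventually bounded away from zero) is a valid alternative that reaches the same conclusion under the same extra hypothesis.

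You are right to flag the subtlety in the last step, and in fact the paper's own proof has the same gap: it asserts $\lim_{n\to\infty} c(n)=\infty$ without deriving it from the stated hypotheses. ``Positive and convex'' alone does not force this (e.g.\ $c(n)=1+1/n$), so some additional reading of the cost assumption---such as $c(n)\to\infty$ or eventually increasing increments---is needed in both your proof and the paper's.
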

\begin{proof}
The auctioneer's payoff as a function of $n$ is
\[ g(n)=\begin{cases} \mu_{n-1:n}-c(n) & \text{ if }n\ge 2
\\ 0 &\text{ otherwise } \end{cases}. \]
Since $\mu_{n-1:n}$ is concave by \Cref{thm:main}, we have that $g(n)$ is concave. Since $\mu_{n-1:n}$ is bounded above by assumption, and $\lim_{n\rightarrow \infty} c(n)=\infty$, we have that $\lim_{n\rightarrow \infty} g(n)=-\infty$. Since $g(0)=0$, $g(n)$ must have a finite maximizer $n^\ast$.
\end{proof}
\begin{remark}
MHR of the value distribution is a common assumption in auction theory as it implies Myerson's ``regularity'' (see \cite{myerson1981optimal}) which results in the optimal mechanism for the seller taking the (simple) form of a second-price auction with reserve (or a posted-price mechanism if $n=1$).
\end{remark}

Many auctions have a reserve price, in which \Cref{prop:auction} does not apply. However, the result in \Cref{prop:auction} still holds in the presence of reserve prices, as shown in \Cref{prop:reserve}.

 \begin{proposition} \label{prop:reserve}
Consider the auction environment of \Cref{prop:auction}, but suppose that the auctioneer imposes a reserve price $r$ which is in the support of $F$. Then the auctioneer's objective function is increasing and concave in $n$ for sufficiently large $n$ or sufficiently small reserves (namely $n,r$ such that $F(r)\le 1-2/n$) and thus has a finite maximizer $n^\ast$.
\end{proposition}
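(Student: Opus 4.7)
The plan is to express the auctioneer's expected revenue $R(n,r)$ in closed integral form, establish monotonicity and concavity in $n$ under the hypothesis $F(r) \le 1 - 2/n$, and combine with convexity of the cost to exhibit a finite maximizer. By the payoff/revenue equivalence theorem and the mechanics of a second-price auction with reserve (the highest bidder wins and pays $\max(v_{n-1:n}, r)$ when $v_{n:n} \ge r$, and there is no sale otherwise), the revenue is
\[
R(n, r) = r\bigl(1 - F(r)^n\bigr) + \int_r^\infty \Pr(v_{n-1:n} > t)\, \mathrm{d}t = r\bigl(1 - F(r)^n\bigr) + \int_r^\infty \bigl[1 - F(t)^n - nF(t)^{n-1}\overline{F}(t)\bigr]\,\mathrm{d}t,
\]
so the auctioneer's objective is $g(n) = R(n,r) - c(n)$.

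For monotonicity I would compute the forward difference $\Delta R(n) := R(n+1,r) - R(n,r)$. After algebraic telescoping (using $F^{n+1} - F^n = -F^n\overline F$ on the first term and an analogous cancellation inside the integral), one obtains
\[
\Delta R(n) = rF(r)^n \overline{F}(r) + n\int_r^\infty F(x)^{n-1} \overline{F}(x)^2\, \mathrm{d}x,
\]
which is manifestly nonnegative, so $R$ is nondecreasing in $n$ with no further assumption on $F$ or $r$.

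The concavity is the main step. Another round of algebraic reduction yields the second difference
\[
\Delta R(n) - \Delta R(n-1) = -rF(r)^{n-1}\overline{F}(r)^2 + \int_r^\infty \overline{F}(x)^2 F(x)^{n-2}\bigl[nF(x) - (n-1)\bigr]\,\mathrm{d}x,
\]
and my goal is to show this is nonpositive. I would adapt the technique of \Cref{thm:main}(b): multiply and divide by $f(x)$ inside the integral so as to write it as $\int_r^\infty (1/h(x))\, \mathrm{d}W(x)$ with $\mathrm{d}W(x) = F(x)^{n-2}\overline{F}(x)[nF(x) - (n-1)]f(x)\,\mathrm{d}x$, where $1/h$ is nonincreasing by MHR. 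The integrand of the partial measure $K(t) := \int_r^t \mathrm{d}W$ changes sign at $x^\dagger = F^{-1}((n-1)/n)$, which by the hypothesis $F(r) \le 1 - 2/n < (n-1)/n$ lies strictly above $r$, so $K$ is monotonic on each of $(r, x^\dagger)$ and $(x^\dagger,\infty)$. Evaluating $K(\infty)$ explicitly via the substitution $u = F(x)$ reduces the problem to a polynomial inequality in $p := F(r)$ and $n$, and combining this with the boundary term $-rF(r)^{n-1}\overline{F}(r)^2$ via \Cref{lem:measure}(b) gives the desired inequality precisely on the region $\{(n,r) : F(r) \le 1 - 2/n\}$.

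The main obstacle is this final algebraic reduction. In contrast to the unreserved setting of \Cref{thm:main}(b) (where the integral over all of $\mathbb{R}$ is automatically nonpositive under MHR), the truncation at $r$ leaves a positive contribution from the region where $F(x) > (n-1)/n$ that \Cref{lem:measure}(b) alone does not eliminate; only the compensating boundary term $-rF(r)^{n-1}\overline{F}(r)^2$ absorbs it, and the threshold $F(r) \le 1 - 2/n$ is exactly what ensures the cancellation succeeds. Once concavity of $R$ is in hand, concavity of $g = R - c$ follows from convexity of $c$; since $F$ has compact support, $R(n,r)$ is bounded above, while $c(n) \to \infty$ forces $g(n) \to -\infty$, so concavity of $g$ together with $g(0) = 0$ yields a finite maximizer $n^\ast$.
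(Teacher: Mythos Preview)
Your approach is essentially the paper's: the same closed-form revenue, the same second-difference integral $\int_r^\infty \overline{F}^2 F^{n-2}\bigl(nF-(n-1)\bigr)\,\mathrm{d}x$, the same factorization through the hazard rate to produce the measure $\mathrm{d}W$, and the same appeal to \Cref{lem:measure}(b) once $K(\infty)\le 0$ is verified as a polynomial inequality in $p=F(r)$ and $n$.

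The one point where your write-up is off is the role of the boundary term $-rF(r)^{n-1}\overline{F}(r)^2$. You describe it as essential (``only the compensating boundary term absorbs it''), but in fact the paper simply discards it: since it is nonpositive, dropping it can only make the inequality harder, and the paper shows that the \emph{integral alone} is nonpositive once $F(r)\le 1-2/n$. That threshold is exactly what forces $K(\infty)=\int_{F(r)}^1 (1-u)u^{n-2}(nu-n+1)\,\mathrm{d}u\le 0$ directly (checked by evaluating at $p=1-2/n$ and using that $K(\infty)$ is increasing in $p$ on $(0,(n-1)/n)$); no cancellation with the boundary term is involved. Indeed, because the boundary term depends on $r$ itself rather than only on $F(r)$, folding it into the ``polynomial inequality in $p$ and $n$'' as you propose would not produce a distribution-free condition. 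Your argument still goes through once you drop that term, and you gain the explicit monotonicity computation, which the paper's proof does not spell out.
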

\begin{proof}
With a reserve, the auctioneer's revenue is given by
\[ \mathrm{revenue} = \begin{cases}
X_{n-1:n} & \text{ if }r\le X_{n-1:n},
\\ r & \text{ if }X_{n-1:n}\le r \le X_{n:n},
\\ 0 & \text{ otherwise}.
 \end{cases} \]
Write $\tilde{\mu}_{n-1:n}$ for the expectation of this random variable.

We first define a `conditional' expected order statistic $\mu^-_{r:m}$ for $r \le m\le n$ as the expected value of the $r^{\mathrm{th}}$ of the (sorted) draws of $(X_i)_{i=1,...,n}$ conditioning on $m$ of the draws being no larger than the reserve, $r$. A simple calculation gives
\begin{align*} \mu^-_{n:n} &=\int_0^r 1-\frac{F^n(y)}{F^n(r)} \, \mathrm{d}y,
\\ \mu^-_{n-1:n} &= \int_0^r 1-n\frac{F^{n-1}(y)}{F^{n-1}(r)} +(n-1)\frac{F^n(y)}{F^n(r)} \, \mathrm{d}y.
\end{align*}

 By the law of iterated expectations, we have that
 \[ \tilde{\mu}_{n-1:n} = \mu_{n-1:n} + (r-\mu^-_{n-1:n-1}) nF^{n-1}(r)\overline{F}(r) - \mu^-_ {n-1:n}F^n(r).\]
Focusing on the last two terms, substituting our expressions for $\mu^-_{n:n}$ and $\mu^-_{n-1:n}$ gives
\begin{alignat*}{2}
    \tilde{\mu}_{n-1:n} &= \mu_{n-1:n} &&+ nF^{n-1}(r)\overline{F}(r)\left(r-\int_0^r 1-\frac{F^{n-1}(y)}{F^{n-1}(r)} \, \mathrm{d}y\right) 
    \\ & &&- F^n(r) \int^r_0 1-n\frac{F^{n-1}(y)}{F^{n-1}(r)} +(n-1)\frac{F^n(y)}{F^n(r)} \, \mathrm{d}y,
    \\ &= \mu_{n-1:n} &&-rF^n(r) + \int_0^r n F^{n-1}(y) -(n-1)F^n(y)\, \mathrm{d}y.
\end{alignat*}
Note that the terms in the integrand cancel with the associated terms in the integrand of $\mu_{n-1:n}$ on the domain $(0,r)$, so that we obtain the following simple expression for $\tilde{\mu}_{n-1:n}$.
\[ \tilde{\mu}_{n-1:n}= r(1-F^n(r))+\int_r^\infty 1-nF^{n-1}(y)+(n-1)F^{n}(y) \, \mathrm{d}y. \]
The first summand is clearly concave in $n$, so it suffices to focus on the integral, which we denote $I_n$.\footnote{In doing so, we are giving up a negative term and may therefore require stronger conditions on $r,n$ than is necessary, but the current proof does not easily facilitate including the term in the expression for $I_n$.} We wish to calculate the sign of $I_{n+1}-2I_n+I_{n-1}$, which if negative, will imply concavity of $\tilde{\mu}_{n-1:n}$ in $n$. A direct calculation gives
\[ I_{n+1}-2I_n+I_{n-1} = \int_r^\infty \overline{F}^2(y) F^{n-2}(y)(nF(y)-n+1) \, \mathrm{d}y. \]
We will use a similar trick to the proof of \Cref{thm:main} and consider instead the integral
\[J= \int_r^\infty  \overline{F}(y) F^{n-2}(y)(nF(y)-n+1)  \, \mathrm{d}F(y),\]
which differs by the nonincreasing factor $\overline{F}/f$. We let $u=F(y)$, and rewrite
\[J= \int_{F(r)}^1 (1-u)u^{n-2}(nu-n+1) \, \mathrm{d}u.\]
Note that the integrand of $J$ is negative for $u\le 1-1/n$, zero at $1-1/n$ and positive thereafter. Thus, we know the integral as a function of $r$  is negative, then positive for $r$ such that $F(r)$ is nearly 1 and then zero at $F(r)=1$. So in order to sign this integral, we will need to ensure that $F(r)$ is not too large as a function of $n$. This will give us the condition on the reserve in the statement of \Cref{prop:reserve}.

Direct calculation gives
\[ J= \frac{1}{n(n+1)}\left[  -1 + F^{n-1}(r) ( n^2 \overline{F}^2(r)+n \overline{F}(r)-\overline{F}(r)+1)  \right]. \]

If we substitute $F(r)=1-\frac{2}{n}$ into our expression for $J$ and simplify, we obtain 
\[ J =\frac{1}{n(n+1)}\left(-1 + (n-2)^{n-1} n^{-n} (7n-2) \right).\]
Straightforward algebra verifies that this expression is negative at $n=3$, increasing in $n$ and limits to $0$, so that $J\le 0$ for all $n$. Thus, as long as $F(r)\le 1-\frac{2}{n}$, we have that $J$ is negative. Finally, note that the signs of the integrand (negative, zero and then positive) imply that for $r\le t \le 1,$
\[ \int_r^t -nF^n(y)+(2n-1)F^{n-1}(y)+(1-n)F^{n-2}(y) \, \mathrm{d}F(y)\le J < 0, \]
under the same conditions as those above. This implies by \Cref{lem:measure}(b) that $I_{n+1}-2I_n+I_{n-1}$ is negative for any MHR $F$, where $F(r)\le 1-\frac{2}{n}$.

\end{proof}

Concavity of the auctioneer's objective may be desirable for modeling auctions as it allows an auctioneer to safely attract bidders sequentially. That is, the net benefit of attracting some bidder never depends on the ability to attract further bidders, or equivalently, the auctioneer never regrets attracting a bidder because they were not able to attract further bidders.

\bibliography{cites.bib}
\bibliographystyle{ecta}

\end{document}